\def\0{{\mathbf 0}}
\def\1{{\mathbf 1}}
\def\F{{\mathbb F}}
\def\Z{{\mathbb Z}}
\def\RR{{\mathcal R}}
\def\Zp8{{\Z_{p^\infty}}}
\newtheorem{thm}{Theorem}[section]
\newtheorem{prop}[thm]{Proposition}
\newtheorem{lem}[thm]{Lemma}
\newtheorem{cor}[thm]{Corollary}
\theoremstyle{definition}
\newtheorem{defn}[thm]{Definition}
\newtheorem{ex}[thm]{Example}
\date{}
\begin{document}

\newcommand{\comment}[1]{} 

\title{Computational Results of \\ Duadic Double Circulant Codes}

\author{
Sunghyu Han \\
School of Liberal Arts \\
Korea University of Technology and Education \\
 Cheonan 330-708, S. Korea \\
 {Email: \tt sunghyu@kut.ac.kr} \\
and \\
  \\
Jon-Lark Kim \thanks {corresponding author}\\
 Department of Mathematics \\
 University of Louisville \\
  Louisville, KY 40292, USA\\
{Email: \tt jl.kim@louisville.edu} \\
} \maketitle

\begin{abstract}
Quadratic residue codes have been one of the most important
classes of algebraic codes. They have been generalized into duadic
codes and quadratic double circulant codes. In this paper we
introduce a new subclass of double circulant codes, called
{\em{duadic double circulant codes}}, which is a generalization of
quadratic double circulant codes for prime lengths. This class generates optimal self-dual codes, optimal linear codes, and linear codes with the best known parameters in a systematic way. We describe a method to construct duadic double circulant codes using
$4$-cyclotomic cosets and give certain duadic double circulant codes over $\mathbb F_2, \mathbb F_3, \mathbb F_4,
\mathbb F_5$, and $\mathbb F_7$. In particular, we find a new
ternary self-dual $[76,38,18]$ code and easily rediscover  optimal
binary self-dual codes with parameters $[66,33,12]$, $[68,34,12]$,
$[86,43,16]$, and $[88,44,16]$ as well as a formally self-dual
binary $[82,41,14]$ code.

\end{abstract}

{\bf{Key Words:}} Double circulant codes, duadic codes, duadic
double circulant codes, quadratic reside codes.

{\bf AMS subject classification}: 94B05, 11T71, 05E99


\newpage

\section{Introduction}

Quadratic Residue (QR) codes have been one of the most interesting
classes of linear codes. They have a good decoding algorithm
called permutation decoding, thanks to a large automorphism
group. They also have high minimum distances and sometimes are
self-dual.

On the other hand, M. Karlin~\cite{Kar} considered binary double
circulant codes based on quadratic residues. Later, V.
Pless~\cite{Ple72} considered ternary double circulant codes based
on quadratic residues, producing the famous Pless symmetry codes.
She also introduced a class of duadic codes as a natural
generalization of QR codes. Duadic codes were generalized
into triadic codes~\cite{PleRus}, polyadic codes~(\cite{BruPle},~\cite{LinXin}) and duadic codes~\cite{LanSol} over $\mathbb
Z_4$. P. Gaborit~\cite{Gab} introduced quadratic double
circulant codes which included Karlin's above construction and the
Pless symmetry codes. He constructed new infinite families of self-dual codes over $GF(4), GF(5), GF(7)$, and $GF(9)$~\cite{Gab}. A uniform generalization of quadratic
double circulant codes was given by S. Dougherty, et.
al.~\cite{DouKimSol} using two class association schemes.

\medskip

In this paper, we introduce a new subclass of double circulant
codes, called {\em duadic double circulant codes}. This is a
generalization of quadratic double circulant codes for prime
lengths and also can produce non self-dual codes.  This class generates optimal self-dual codes, optimal linear codes, and linear codes with the best known parameters in a systematic way.
For another motivation of duadic double circulant codes, we observe that there has been a classification of extremal double circulant self-dual codes of lengths up to $88$ (see~\cite{GulHar},~\cite{GulHar2006}). These codes are classified by an exhaustive computer search. Hence it is natural to construct a subclass of double circulant codes algebraically which contains both self-dual and non self-dual codes.

We compute all duadic double circulant codes of lengths up to $82, 82, 62, 58, 38$ over $\mathbb F_2, \mathbb F_3, \mathbb F_4,
\mathbb F_5$, and $\mathbb F_7$, respectively, that are based on $4$-cyclotomic
cosets. In particular, we rediscover optimal binary self-dual
codes with parameters $[66,33,12]$, $[68,34,12]$, $[86,43,16]$,
and $[88,44,16]$ as well as a formally self-dual binary
$[82,41,14]$ code. Using $9$-cyclotomic cosets, we find a new
ternary self-dual $[76,38,18]$ code. We further construct optimal
formally self-dual codes or codes with the best known parameters,
which cannot be obtained from quadratic double circulant construction (see
Example~\ref{ex:pure_bord} (iv) and Example~\ref{ex:n=43}).



\section{Splitting}

We begin with some definitions.
Let $n$ be a positive integer and $a$ ($1 \le a \le n-1$) be an
integer such that $\mbox{gcd}(a,n) = 1$. Then the function $\mu_a$
defined on $\{0, 1, 2, \cdots, n-1\}$ by $i\mu_a \equiv ia \pmod
n$ is called a \textit{multiplier}.

\begin{defn} \label{def:spl}
Let $n$ be an odd positive integer with $n>1$. A pair $(S_1, S_2)$ of two sets
$S_1$ and $S_2$ is called a {\em (generalized duadic) splitting}
of $n$ if the following two conditions are satisfied:
\begin{enumerate}
\item $S_1$ and $S_2$ satisfy
\begin{equation*}
S_1 \cup S_2 = \{ 1, 2, \cdots, n-1 \} ~~ \textrm{and} ~~ S_1 \cap
S_2 = \emptyset,
\end{equation*}
\item there is a multiplier $\mu_a$ such that
\begin{equation*}
S_1\mu_a = S_2 ~~ \textrm{and} ~~ S_2\mu_a = S_1.
\end{equation*}
\end{enumerate}
\end{defn}

In fact, condition (ii) of Definition~\ref{def:spl} can be
weakened as follows.

\begin{lem} \label{lem:multiplier}
Let $A=\{1, 2, \cdots, n-1 \}, {\mbox{ with odd }}n\geq 1$, $S_1
\cup S_2 = A$, and $S_1 \cap S_2=\emptyset$. Let $\mu_a$ be a
multiplier. If $S_1\mu_a = S_2$, then $S_2\mu_a =
S_1$.
\end{lem}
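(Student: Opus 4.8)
The plan is to exploit the single structural fact that the multiplier $\mu_a$ is a permutation of $A=\{1,2,\ldots,n-1\}$, after which the conclusion follows purely set-theoretically from the disjoint-union hypothesis. First I would verify that $\mu_a$ restricts to a bijection of $A$ onto itself. Since $\gcd(a,n)=1$, multiplication by $a$ modulo $n$ is invertible on $\mathbb{Z}/n\mathbb{Z}$, so $\mu_a$ permutes the full residue system $\{0,1,\ldots,n-1\}$; because $0\mu_a\equiv 0$, the element $0$ is fixed, and therefore $\mu_a$ permutes the nonzero residues $A$. In particular $\mu_a$ is injective on $A$ and satisfies $A\mu_a=A$.

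Next I would transport the partition through this bijection. From $S_1\cup S_2=A$ and $S_1\cap S_2=\emptyset$, applying $\mu_a$ gives
\begin{equation*}
S_1\mu_a \cup S_2\mu_a = A\mu_a = A,
\end{equation*}
while injectivity of $\mu_a$ forces the images of disjoint sets to remain disjoint, so $S_1\mu_a \cap S_2\mu_a = \emptyset$. Thus $(S_1\mu_a, S_2\mu_a)$ is again a partition of $A$, which means $S_2\mu_a = A \setminus S_1\mu_a$.

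Finally I would substitute the hypothesis $S_1\mu_a=S_2$. This yields
\begin{equation*}
S_2\mu_a = A \setminus S_1\mu_a = A \setminus S_2 = S_1,
\end{equation*}
using $A\setminus S_2 = S_1$ from the original partition, which is exactly the desired conclusion.

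I do not anticipate a genuine obstacle here, as the argument is essentially bookkeeping once the permutation property is in hand; the only point requiring care is the first step, namely confirming that $\mu_a$ genuinely acts as a bijection on $A$ rather than merely on all of $\mathbb{Z}/n\mathbb{Z}$, which rests on the fixed point $0\mu_a\equiv 0$ together with the coprimality condition $\gcd(a,n)=1$. Everything else is a direct consequence of the fact that a bijection sends complements to complements.
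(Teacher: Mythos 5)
Your proof is correct and follows essentially the same route as the paper's: both arguments rest on $\mu_a$ being a bijection of $A$, so that the images $S_1\mu_a$ and $S_2\mu_a$ again partition $A$, whence $S_1\mu_a=S_2$ forces $S_2\mu_a=S_1$ by complementation. The only difference is that you spell out why $\mu_a$ permutes $A$ (coprimality plus the fixed point $0$), a detail the paper takes for granted.
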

\begin{proof}
Since $\mu_a$ is bijective on $A$, we have $A=A \mu_a = S_1 \mu_a
\cup S_2 \mu_a$ and $S_1 \mu_a \cap S_2 \mu_a=\emptyset$. Hence if
$S_1 \mu_a=S_2$, then $S_2\mu_a = S_1$.
\end{proof}

It is natural to ask when there exists a splitting of a given odd integer $n$.

\begin{prop} \label{prop:spl}
Let $ord_n(a)$ denote the multiplicative order of $a$ modulo $n$. Then the
following hold.
\begin{enumerate}
\item There exists a splitting for any odd $n$ with $n>1$. \label{item:01}
\item If $ord_n(a)$ is odd, then there is no splitting of $n$ with $\mu_a$. \label{item:02}
\item If $n$ is a prime and $ord_n(a)$ is $n-1$,
then the splitting $S_1$ and $S_2$ is the set of quadratic residues and
the set of quadratic nonresidues of $n$. \label{item:03}
\end{enumerate}
\end{prop}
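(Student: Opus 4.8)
The unifying tool is the cycle (orbit) structure of the permutation $\mu_a$ acting on $A=\{1,2,\dots,n-1\}$. The key observation is that a partition $A=S_1\cup S_2$ (with $S_1\cap S_2=\emptyset$) satisfies $S_1\mu_a=S_2$ --- which by Lemma~\ref{lem:multiplier} is exactly what condition (ii) of a splitting demands --- precisely when, along each cycle of $\mu_a$, the elements alternate between $S_1$ and $S_2$. Indeed, if $i\in S_1$ then $i\mu_a\in S_2$, $i\mu_a^2\in S_1$, and in general $i\mu_a^k$ lies in $S_1$ or $S_2$ according as $k$ is even or odd. Since a cycle of length $\ell$ containing $i$ satisfies $i\mu_a^{\ell}=i$, such an alternating assignment is consistent if and only if $\ell$ is even. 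Thus I would first record the criterion: a splitting of $n$ with $\mu_a$ exists if and only if every cycle of $\mu_a$ on $A$ has even length, and then every splitting arises by independently choosing, on each cycle, one of its two alternating $2$-colorings.

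With this in hand, (i) and (ii) are short. For (i), take $a\equiv -1\pmod n$, so that $\mu_{-1}\colon i\mapsto n-i$. Because $n$ is odd the equation $i=n-i$ has no solution in $A$, so $\mu_{-1}$ is a fixed-point-free involution and every cycle has length $2$; the criterion then produces a splitting (concretely, choose one element from each pair $\{i,n-i\}$ for $S_1$ and its image for $S_2$). For (ii), the cycle of the element $1$ is $1,a,a^2,\dots\pmod n$, whose length is exactly $ord_n(a)$. If $ord_n(a)$ is odd this cycle has odd length, so by the criterion no alternating coloring exists and there can be no splitting with $\mu_a$.

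For (iii), the hypothesis $ord_n(a)=n-1$ says that $a$ is a primitive root modulo the prime $n$, so $\{1,a,a^2,\dots,a^{n-2}\}=A$ forms a single cycle, of even length $n-1$. By the criterion the only freedom in a splitting is the choice between the two alternating colorings, namely $S_1$ equal to the even powers $\{a^{2j}\}$ with $S_2$ the odd powers $\{a^{2j+1}\}$, or the reverse. Since every element of $A$ is a power of the primitive root $a$, squaring shows that the even powers are exactly the quadratic residues modulo $n$ and the odd powers the nonresidues; hence the splitting is forced, as an unordered pair, to be the quadratic residue / nonresidue partition.

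The step that needs the most care is the alternation criterion itself: I must check that the return condition $i\mu_a^{\ell}=i$ genuinely forces even cycle length and that, when $\mu_a$ has several cycles, colorings chosen on different cycles never interfere (they do not, since the constraint $S_1\mu_a=S_2$ links each $i$ only to $i\mu_a$, which lies in the same cycle). Everything else reduces to three clean facts --- $\mu_{-1}$ splits $A$ into $2$-cycles, the cycle of $1$ has length $ord_n(a)$, and the even powers of a primitive root are precisely the quadratic residues --- the last of which is the one place I would invoke the standard cyclic structure of the multiplicative group modulo a prime.
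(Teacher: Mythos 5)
Your proposal is correct and follows essentially the same route as the paper: alternation of $S_1$ and $S_2$ along the orbits of $\mu_a$, with $\mu_{-1}$ giving (i), the odd-length cycle of $1$ giving the contradiction in (ii), and the primitive-root single cycle forcing the quadratic-residue partition in (iii). The only difference is presentational---you package the alternation argument as a general even-cycle-length criterion, whereas the paper applies it directly in each case (normalizing $1\in S_1$ and writing $S_1=\{1,a^2,a^4,\dots\}$).
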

\begin{proof}
For (\ref {item:01}), we can take $S_1= \{1, 2, \cdots,
\frac{n-1}{2} \}$ and $S_2= \{\frac{n+1}{2}, \frac{n+3}{2},
\cdots, n-1 \}$ with $\mu_{-1}$. For (\ref {item:02}) and (\ref
{item:03}), without loss of generality, we may assume that $S_1$
contains $1$. Then $S_1=\{1, a^2, a^4, \cdots\}$ and $S_2=\{a,
a^3, a^5, \cdots\}$ under $\mu_a$ by Lemma~\ref{lem:multiplier}. Suppose
$ord_n(a) = 2k+1$ for some $k$. Then $a^{2k} \in S_1$ and
$a^{2k}\mu_a = 1 \in S_1$. This is a contradiction. This proves (ii). Next suppose that $n$ is a prime and that
$ord_n(a) = n-1$. Then $S_1$ is the set of quadratic residues and
$S_2$ is the set of quadratic nonresidues of $n$, proving (iii).
\end{proof}

\bigskip

\section{Duadic double circulant codes}

Now we are ready to define Duadic Double Circulant (DDC) codes.
First, we choose a splitting $(S_1, S_2)$ for an odd positive
integer $n$, where $S_1$ and $S_2$ do not need to be unions of nonzero $q$-cyclotomic cosets.
 With this splitting, we define two generating
matrices below.
\begin{eqnarray*}
P_n(r,s,t, S_1, S_2) &=& (I | D_n(r,s,t, S_1, S_2)), \\
B_n(\alpha, \beta, \gamma, r,s,t, S_1, S_2) &=& \left(
  \begin{array}{c|ccc|c|ccc}
    1 & 0 & \cdots & 0 & \alpha & \beta & \cdots & \beta \\
    \hline
    0 &  &  &  & \gamma & & & \\
    \vdots &  & I &  & \vdots & &D_n(r,s,t, S_1, S_2)& \\
    0 &  &  &  & \gamma & && \\
  \end{array}
\right),
\end{eqnarray*}
where $r,s,t,\alpha, \beta, \gamma \in \F_q$, $I$ is an $n \times
n$ identity matrix, and $D_n(r,s,t, S_1, S_2)$ is an $n \times n$ circulant
matrix whose first row is defined as follows. Suppose that $(a_0,
a_1, \cdots, a_{n-1})$ is the first row of $D_n(r,s,t, S_1, S_2)$. Then
define $a_0 = r$, $a_i = s$ if $i \in S_1$, and $a_i = t$ if $i
\in S_2$. Using these two generating matrices, we define $[2n, n]$
and $[2n+2, n+1]$ linear codes over $\F_q$ and we denote them by
$\mathcal{P}_n(r, s, t, S_1, S_2)$ and $\mathcal{B}_n(\alpha, \beta, \gamma,
r, s, t, S_1, S_2)$ respectively. We call $\mathcal{P}_n(r, s, t, S_1, S_2)$ a duadic
pure double circulant code, $\mathcal{B}_n(\alpha, \beta, \gamma,
r, s, t, S_1, S_2)$ duadic bordered double circulant code, and we call both
\textit{duadic double circulant codes}. Note that duadic double
circulant codes contain Gaborit's quadratic double circulant codes
\cite{Gab} when $n$ is a prime by Proposition \ref{prop:spl}
(\ref{item:03}). We also remark that $\mathcal{P}_n(r, s, t, S_1, S_2)$ over
$\mathbb F_2$ is always formally self-dual~\cite[Sec. 9.8]{HufPle}, while
$\mathcal{B}_n(\alpha, \beta, \gamma, r, s, t, S_1, S_2)$ is not
necessarily. (See Example~\ref{ex:pure_bord} in
Section~\ref{sec:comp_res}.)

\begin{prop}
The codes  $\mathcal{P}_n(r, s, t, S_1, S_2)$ and
$\mathcal{P}_n(r, s, t, S_2, S_1)$ are equivalent. Similarly,
$\mathcal{B}_n(\alpha, \beta, \gamma,
r, s, t, S_1, S_2)$ and $\mathcal{B}_n(\alpha, \beta, \gamma,
r, s, t, S_2, S_1)$ are equivalent.
\end{prop}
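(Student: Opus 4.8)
The plan is to exhibit an explicit coordinate permutation carrying one code onto the other, built from the multiplier $\mu_a$ supplied by the splitting $(S_1,S_2)$. The guiding observation is that interchanging $S_1$ and $S_2$ in $D_n(r,s,t,\cdot,\cdot)$ merely swaps the entries $s$ and $t$, so that $D_n(r,s,t,S_2,S_1)$ is the circulant with first-row polynomial $r+s\sum_{j\in S_2}x^j+t\sum_{j\in S_1}x^j$; I want to realize this interchange geometrically. To this end I would identify a row vector $\mathbf v=(v_0,\dots,v_{n-1})$ with the polynomial $v(x)=\sum_i v_i x^i$ in the group algebra $R=\F_q[x]/(x^n-1)$. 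Since $\gcd(a,n)=1$, under this identification the coordinate permutation $\mu_a$ becomes the ring automorphism $\phi_a\colon v(x)\mapsto v(x^a)$, and right multiplication by $D_n(r,s,t,S_1,S_2)$ becomes multiplication by its first-row polynomial $d(x)=r+s\sum_{i\in S_1}x^i+t\sum_{i\in S_2}x^i$. (Equivalently, one may conjugate the circulant by the permutation matrix of $\mu_a$, but the polynomial language is cleaner.)

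For the pure code, the first step is to write a general codeword of $\mathcal{P}_n(r,s,t,S_1,S_2)$ as $(\mathbf u,\mathbf u D)$, i.e.\ as the pair $(u(x),u(x)d(x))$, and then to apply $\mu_a$ to both halves simultaneously. Because $\phi_a$ is a ring homomorphism, the image is $(\phi_a(u),\phi_a(u)\,\phi_a(d))$; and since $\phi_a$ is a bijection of $\F_q^n$, as $\mathbf u$ ranges over $\F_q^n$ so does its image, so the resulting code is again pure double circulant, now with first-row polynomial $\phi_a(d)=d(x^a)$. The crux is to compute this polynomial: $\mu_a$ fixes $0$, and by Definition~\ref{def:spl} together with Lemma~\ref{lem:multiplier} it sends $S_1$ onto $S_2$ and $S_2$ onto $S_1$, so $x^i$ with $i\in S_1$ is carried to $x^{ia}$ with $ia\bmod n\in S_2$, and symmetrically. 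Hence $\phi_a(d)=r+s\sum_{j\in S_2}x^j+t\sum_{j\in S_1}x^j$, which is exactly the first-row polynomial of $D_n(r,s,t,S_2,S_1)$, yielding the equivalence.

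For the bordered code I would keep the two border coordinates fixed and apply $\mu_a$ only to the $n$ inner coordinates of each half. Writing a codeword as $\lambda(\text{top row})+\mathbf u(\text{inner rows})$, its coordinates are $\big(\lambda;\ \mathbf u;\ \lambda\alpha+\gamma\sum_i u_i;\ \lambda\beta\1+\mathbf u D\big)$. The point is that the border column is constant $\gamma$ and the border row is constant $\beta\1$ on the inner positions, hence invariant under any permutation of those positions; moreover $\sum_i u_i$ is unchanged since $\mu_a$ only permutes the $u_i$, and $\phi_a(\1)=\1$. Thus applying $\mu_a$ to the inner coordinates turns the codeword into $\big(\lambda;\ \mathbf u';\ \lambda\alpha+\gamma\sum_i u_i';\ \lambda\beta\1+\mathbf u' D'\big)$ with $\mathbf u'=\phi_a(\mathbf u)$ and $D'=D_n(r,s,t,S_2,S_1)$, and letting $(\lambda,\mathbf u)$ run over $\F_q\times\F_q^n$ shows this is precisely $\mathcal{B}_n(\alpha,\beta,\gamma,r,s,t,S_2,S_1)$.

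I expect the main obstacle to be essentially bookkeeping: fixing the circulant-multiplication convention and verifying carefully that $\phi_a(d)=d(x^a)$ coincides with the swapped first row, which is exactly where the splitting hypotheses $S_1\mu_a=S_2$ and $S_2\mu_a=S_1$ do the real work. Once that identity is secured, both equivalences drop out from the homomorphism property of $\phi_a$, the bordered case needing only the additional remark that the constant border row and column, and the sum $\sum_i u_i$, are fixed by inner permutations.
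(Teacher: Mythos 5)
Your proof is correct and follows essentially the same route as the paper: the paper also identifies the circulant blocks with multiplication in $\F_q[x]/(x^n-1)$, applies the multiplier $\mu_a$ as a ring automorphism to the polynomial halves (leaving the border coordinates fixed, using that $j(x)\mu_a = j(x)$ and $f(1)$ is invariant), and concludes permutation equivalence. The only difference is cosmetic: you spell out explicitly that $d(x^a)$ is the first-row polynomial of $D_n(r,s,t,S_2,S_1)$, a step the paper leaves implicit when deducing the proposition from its two lemmas.
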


The proof of this proposition follows from the below lemmas.

\begin{lem}
Let $\F_q$ be an arbitrary finite field. Let $\RR_n = \F_q[x]/(x^n
-1), \gcd(n,q)=1$ and $\mu_a$ be a multiplier with $\gcd(a,n)=1$.
Consider the following two pure double circulant codes in $\RR_n
\oplus \RR_n$.
\begin{eqnarray*}
C_1 &=& \{(f(x), e_1(x)f(x)) ~|~ f(x) \in \RR_n  \},  \\
C_2 &=& \{(f(x), e_2(x)f(x)) ~|~ f(x) \in \RR_n  \},
\end{eqnarray*}
where $e_2(x) = e_1(x)\mu_a$. Then $C_1$ and $C_2$ are permutation
equivalent.
\end{lem}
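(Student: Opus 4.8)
The plan is to exhibit a single coordinate permutation that carries $C_1$ onto $C_2$, namely the multiplier $\mu_a$ applied simultaneously within each of the two blocks of $n$ coordinates. The algebraic heart of the argument is the observation that, read on polynomials, the multiplier $\mu_a$ is nothing but the substitution $x \mapsto x^a$, which is a ring automorphism of $\RR_n$.

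First I would verify that $\mu_a$ induces a ring automorphism of $\RR_n$. Writing $f(x) = \sum_i c_i x^i$, the permutation of coordinates sending position $i$ to position $ia \bmod n$ yields $(f\mu_a)(x) = \sum_i c_i x^{ia} \equiv f(x^a) \pmod{x^n - 1}$. Since $(x^a)^n = (x^n)^a \equiv 1$, the substitution descends to a well-defined endomorphism of $\RR_n$; it is multiplicative because substitution respects products; and because $\gcd(a,n) = 1$ there is $b$ with $ab \equiv 1 \pmod n$, so $\mu_b$ inverts $\mu_a$. Hence $\mu_a$ is a ring automorphism of $\RR_n$, and in particular a bijection.

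Next I would apply the coordinate permutation $\sigma$ of the $2n$ positions that acts as $\mu_a$ within the first copy of $\RR_n$ and as $\mu_a$ within the second copy. For a codeword $(f, e_1 f) \in C_1$ this gives
\[
\sigma(f,\, e_1 f) = \big(f\mu_a,\, (e_1 f)\mu_a\big) = \big(f\mu_a,\, (e_1\mu_a)(f\mu_a)\big) = \big(f\mu_a,\, e_2\,(f\mu_a)\big),
\]
using multiplicativity of $\mu_a$ and the hypothesis $e_2 = e_1\mu_a$. Writing $g = f\mu_a$ and letting $f$ range over all of $\RR_n$, surjectivity of $\mu_a$ shows that $g$ ranges over all of $\RR_n$, so the image is exactly $\{(g,\, e_2 g) : g \in \RR_n\} = C_2$. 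Thus $\sigma(C_1) = C_2$, and the two codes are permutation equivalent.

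I expect the only point requiring real care to be the first step, namely verifying that $\mu_a$ passes to a well-defined \emph{multiplicative} map on the quotient ring $\RR_n$ rather than merely permuting coordinates, since the whole argument hinges on the identity $(e_1 f)\mu_a = (e_1\mu_a)(f\mu_a)$. Once that is in hand, the rest is a one-line bijection count; one should also note in passing that applying the same $\mu_a$ to both blocks is a legitimate single permutation of the $2n$ positions, which is immediate.
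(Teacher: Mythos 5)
Your proof is correct and takes essentially the same route as the paper: apply the multiplier $\mu_a$ simultaneously to both blocks of $n$ coordinates, use the fact that $\mu_a$ is a ring automorphism of $\RR_n$ to write $(e_1(x)f(x))\mu_a = (e_1(x)\mu_a)(f(x)\mu_a) = e_2(x)(f(x)\mu_a)$, and conclude by surjectivity of $\mu_a$. The only difference is that you verify the automorphism property of $\mu_a$ from scratch (via the substitution $x \mapsto x^a$ and its inverse $\mu_b$), whereas the paper simply cites it from Huffman and Pless, Theorem 4.3.12.
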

\begin{proof}
Let
\begin{equation*}
C_1\mu_a = \{(f(x)\mu_a, (e_1(x)f(x))\mu_a) ~|~ f(x) \in \RR_n  \}.
\end{equation*}
In \cite[Theorem 4.3.12]{HufPle}, we know that $\mu_a$ is an
automorphism of $\RR_n$. Therefore,
\begin{eqnarray*}
C_1\mu_a &=& \{(f(x)\mu_a, (e_1(x)\mu_a)(f(x)\mu_a) ) ~|~ f(x) \in \RR_n  \} \\
         &=& \{(h(x), e_2(x)h(x) ) ~|~ h(x) \in \RR_n  \} \\
         &=& C_2.
\end{eqnarray*}
\end{proof}
\begin{lem}
Let $\F_q$ be an arbitrary finite field. Let $\RR_n = \F_q[x]/(x^n
-1), \gcd(n,q)=1$ and $\mu_a$ be a multiplier with $\gcd(a,n)=1$.
Consider the following two bordered double circulant codes in
$\F_q \oplus \RR_n \oplus \F_q \oplus \RR_n$.
\begin{eqnarray*}
C_1 &=& \{(b, f(x), b\alpha + f(1)\gamma, b\beta j(x) + e_1(x)f(x)) ~|~ b\in \F_q, f(x) \in \RR_n  \},  \\
C_2 &=& \{(b, f(x), b\alpha + f(1)\gamma, b\beta j(x) + e_2(x)f(x)) ~|~ b\in \F_q, f(x) \in \RR_n  \},  \\
\end{eqnarray*}
where $\alpha, \beta, \gamma \in \F_q, e_2(x) = e_1(x)\mu_a, j(x) = 1+x+x^2 +\cdots +x^{n-1}
$. Then $C_1$ and $C_2$ are
permutation equivalent.
\end{lem}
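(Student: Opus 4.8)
The plan is to imitate the proof of the preceding lemma almost verbatim: I would apply the multiplier $\mu_a$ to both $\RR_n$-blocks of $C_1$ while leaving the two border coordinates fixed, and then verify that the resulting coordinate permutation sends $C_1$ onto $C_2$. Concretely, let $\sigma$ be the permutation of the $2n+2$ coordinates that acts as $\mu_a$ on the first $\RR_n$-block (the $f(x)$ part) and on the second $\RR_n$-block (the $b\beta j(x) + e_1(x)f(x)$ part), and as the identity on the two $\F_q$-coordinates. The claim to establish is $C_1\sigma = C_2$.

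Before the computation I would isolate the three facts that make the border and the all-ones vector behave well. First, by \cite[Theorem 4.3.12]{HufPle}, $\mu_a$ is a ring automorphism of $\RR_n$, so $(e_1(x)f(x))\mu_a = (e_1(x)\mu_a)(f(x)\mu_a) = e_2(x)\,(f(x)\mu_a)$. Second, since $j(x) = 1 + x + \cdots + x^{n-1}$ is the all-ones vector and $\mu_a$ merely permutes the coordinates $\{0,1,\dots,n-1\}$ (fixing $0$), we have $j(x)\mu_a = j(x)$. Third, $f(1)$ is the sum of the coefficients of $f(x)$, which is invariant under any coordinate permutation, so $(f(x)\mu_a)(1) = f(1)$; this is exactly what keeps the border entry $b\alpha + f(1)\gamma$ unchanged.

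Granting these, applying $\sigma$ to a generic codeword of $C_1$ yields
\[
\left(b,\ f(x)\mu_a,\ b\alpha + f(1)\gamma,\ b\beta j(x) + e_2(x)\,(f(x)\mu_a)\right).
\]
Writing $h(x) = f(x)\mu_a$ and using $(f(x)\mu_a)(1) = f(1) = h(1)$, this equals $\left(b,\ h(x),\ b\alpha + h(1)\gamma,\ b\beta j(x) + e_2(x)h(x)\right)$. Since $\mu_a$ is a bijection on $\RR_n$, as $f$ ranges over $\RR_n$ so does $h$; hence the image is precisely the defining set of $C_2$, giving $C_1\sigma = C_2$ and the desired permutation equivalence.

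The calculation is routine; the only points requiring a moment's care are the two invariances $j(x)\mu_a = j(x)$ and $(f(x)\mu_a)(1) = f(1)$, which together guarantee that the border row is preserved by $\sigma$. This is the one place where the bordered case genuinely differs from the pure circulant case handled in the previous lemma, and so it is the step I would write out most explicitly.
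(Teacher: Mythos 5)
Your proposal is correct and follows essentially the same route as the paper: apply $\mu_a$ to both $\RR_n$-blocks, use the fact that $\mu_a$ is a ring automorphism of $\RR_n$ to split $(e_1(x)f(x))\mu_a$, and substitute $h(x)=f(x)\mu_a$. In fact you make explicit two invariances ($j(x)\mu_a = j(x)$ and $(f(x)\mu_a)(1)=f(1)$) that the paper's proof uses silently, which is a welcome clarification rather than a deviation.
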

\begin{proof}
Let
\begin{equation*}
C_1\mu_a = \{(b, f(x)\mu_a, b\alpha + f(1)\gamma, (b\beta j(x) + e_1(x)f(x))\mu_a ~|~ b\in \F_q, f(x) \in \RR_n  \}.  \\
\end{equation*}
Then,
\begin{eqnarray*}
C_1\mu_a &=& \{(b, f(x)\mu_a, b\alpha + f(1)\gamma, b\beta j(x) + (e_1(x)\mu_a)(f(x)\mu_a) ~|~ b\in \F_q, f(x) \in \RR_n  \}.  \\
         &=& \{(b, h(x), b\alpha + h(1)\gamma, b\beta j(x) + e_2(x) h(x) ~|~ b\in \F_q, h(x) \in \RR_n  \}.  \\
         &=& C_2.
\end{eqnarray*}
\end{proof}

\begin{defn} \label{def:duadic}
Suppose that there is a splitting $(S_1, S_2)$ of $n$ and let $q$ be a power of a prime with $\gcd(q,
n) = 1$. Furthermore, if $S_1$ and $S_2$ are unions of nonzero $q$-cyclotomic
cosets, then the cyclic codes with defining sets $S_1$ and $S_2$
are called the {\em{odd-like duadic codes of length $n$ over
$\mathbb F_q$.}} On the other hand, the cyclic codes with defining
sets $S_1 \cup \{ 0 \}$ and $S_2 \cup \{ 0 \}$ are called the
{\em{even-like duadic codes of length $n$ over $\mathbb F_q$.}}
\end{defn}

\begin{thm} {\em \cite[Chapter 6]{HufPle}} \label{thm:dua_exist}
Duadic codes of odd length $n$ over $\mathbb F_q$ exist if and only
if $q$ is a square modulo $n$.
\end{thm}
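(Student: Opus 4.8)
The plan is to translate the existence of duadic codes into a purely combinatorial problem about $2$-colouring the nonzero $q$-cyclotomic cosets modulo $n$. A duadic splitting requires a partition $\{1,\dots,n-1\}=S_1\sqcup S_2$ with each $S_i$ a union of $q$-cyclotomic cosets and a multiplier $\mu_a$ interchanging them. Since the $q$-cyclotomic cosets are exactly the orbits of multiplication by $q$ on $R:=\{1,\dots,n-1\}$, I would encode such a splitting as a function $f\colon R\to\{1,2\}$ with $f(qx)=f(x)$ and $f(ax)=3-f(x)$ for all $x$. First I would fix $a$ and observe, by the standard parity (groupoid $2$-colouring) argument, that such an $f$ exists precisely when every closed walk built from the commuting moves $x\mapsto qx$ and $x\mapsto ax$ uses an even number of $a$-moves. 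Because these moves are multiplications, this says: for all $x\in R$ and all integers $j,k$, the congruence $a^jq^k x\equiv x\pmod n$ forces $j$ even.

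Next I would simplify this parity condition. Writing $d=\gcd(x,n)$, the congruence $a^jq^kx\equiv x\pmod n$ is equivalent to $a^jq^k\equiv 1\pmod{n/d}$, and as $x$ runs over $R$ the modulus $n/d$ runs over all divisors of $n$ exceeding $1$. Hence the condition is equivalent to the same parity statement taken modulo every prime $p\mid n$ (the loop for modulus $p$ being realised by $x=n/p$), and these prime conditions decouple: by the Chinese Remainder Theorem the multiplier $a$ may be chosen to induce any prescribed unit modulo each $p$ independently. So a duadic splitting exists iff, for every prime $p\mid n$, there is a unit $\bar a\in\mathbb{Z}_p^*$ with the property that $\bar a^{\,j}\bar q^{\,k}\equiv 1\pmod p$ implies $j$ even.

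I would then resolve each prime locally inside the cyclic group $\mathbb{Z}_p^*$ of order $p-1$. Writing everything additively via a primitive root $g$, with $\bar q=g^\beta$ and $\bar a=g^\alpha$, the set of attainable $a$-exponents $j$ (those admitting some $k$) is exactly the set of multiples of $\delta/\gcd(\alpha,\delta)$, where $\delta=\gcd(\beta,p-1)=(p-1)/\mathrm{ord}_p(q)$. A choice of $\alpha$ making every such $j$ even exists iff $\delta$ is even, i.e. iff $\mathrm{ord}_p(q)\mid (p-1)/2$, which by Euler's criterion is exactly the assertion that $q$ is a quadratic residue modulo $p$. Thus a splitting exists iff $q$ is a square modulo every prime divisor of $n$.

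Finally, since $n$ is odd every such $p$ is odd, so Hensel lifting gives that $q$ is a square modulo $p^e$ iff it is a square modulo $p$; hence $q$ is a square modulo $n$ iff it is a square modulo each $p\mid n$, and combining this with the previous paragraph settles both directions simultaneously. I expect the main obstacle to be the local bookkeeping: correctly identifying the attainable exponent set in $\mathbb{Z}_p^*$ and verifying that evenness of $\delta$ is both necessary and sufficient, together with the care needed, when $n$ is composite, to handle non-unit residues $x$, to justify the reduction to prime moduli, and to make the independent CRT choice of the multiplier rigorous.
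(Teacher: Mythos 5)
The paper contains no proof of this theorem at all: it is quoted as a known result with the citation to Huffman and Pless, Chapter 6 (the classical existence criterion for duadic codes, due in this generality to Smid), so there is no internal argument to compare yours against step by step. Judged on its own merits, your proof is correct and essentially complete as an outline. The translation of a duadic splitting into a $2$-colouring $f$ with $f(qx)=f(x)$ and $f(ax)=3-f(x)$, and its equivalence with the parity condition ``$a^jq^kx\equiv x\pmod n$ forces $j$ even,'' is the standard orbitwise well-definedness argument (necessity by applying $f$ along the relation, sufficiency by colouring each orbit of $\langle\mu_a,\mu_q\rangle$ from a base point). The reduction to prime moduli is valid because the condition modulo a prime $p$ implies it modulo any modulus divisible by $p$, while each modulus $n/\gcd(x,n)>1$ is divisible by some prime factor of $n$; the CRT decoupling is legitimate because the condition at $p$ depends only on $a\bmod p$ and any family of units $\bar a_p$ lifts to a unit $a$ modulo $n$. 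The local computation is also right: the attainable exponents $j$ are exactly the multiples of $\delta/\gcd(\alpha,\delta)$ with $\delta=(p-1)/\mathrm{ord}_p(q)$, so a workable $\alpha$ exists (e.g.\ $\alpha=1$, i.e.\ $a$ a primitive root mod $p$) if and only if $\delta$ is even, which by Euler's criterion says precisely that $q$ is a quadratic residue mod $p$; and since $n$ is odd, squares mod $p$ lift to squares mod $p^e$, giving the equivalence with ``$q$ is a square mod $n$.'' In short, you have reconstructed from scratch, by a clean local-to-global argument, a theorem the paper simply imports; the only points a full write-up must make explicit are the implicit hypothesis $\gcd(q,n)=1$ (which is built into the paper's Definition~\ref{def:duadic}) and the well-definedness argument behind the orbitwise colouring.
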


Thus we have a systematic way to construct splittings of $n$ using
$q$-cyclotomic cosets of $n$. This method also includes splittings
consisting of quadratic residues and nonresidues as shown below.

\begin{cor} \label{cor:cyc}
\begin{enumerate}
\item If $q$ is a square modulo $n$, then there is a splitting $(S_1, S_2)$, each of which consists of a
union of $q$-cyclotomic cosets of $n$. \label{cor:cyc1}

\item If $n$ is an odd prime and $q$ is a square modulo $n$, then
the set of quadratic residues of $n$ consists of a union of
$q$-cyclotomic cosets of $n$, and so does the set of quadratic
nonresidues of $n$.
 \label{cor:cyc2}
\end{enumerate}
\end{cor}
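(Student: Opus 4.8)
Part~(\ref{cor:cyc1}) is essentially a reformulation of Theorem~\ref{thm:dua_exist}, so I would dispatch it first. If $q$ is a square modulo $n$, then Theorem~\ref{thm:dua_exist} guarantees that duadic codes of length $n$ over $\F_q$ exist; by Definition~\ref{def:duadic}, any such pair of codes arises from a splitting $(S_1,S_2)$ of $n$ whose two halves are both unions of nonzero $q$-cyclotomic cosets. That splitting is exactly what part~(\ref{cor:cyc1}) demands, so the first claim follows immediately. A fully self-contained argument would instead have to exhibit the partition of the nonzero cyclotomic cosets that is swapped by a suitable multiplier, but this is precisely the content of Theorem~\ref{thm:dua_exist}, so I would invoke that theorem rather than reprove it.

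For part~(\ref{cor:cyc2}), the plan is a short group-theoretic computation. Let $n$ be an odd prime, so that $\{1,2,\dots,n-1\}$ is the cyclic multiplicative group $(\Z/n\Z)^{*}$ of order $n-1$, with the set $Q$ of quadratic residues its unique subgroup of index $2$ and the nonresidues forming the nontrivial coset of $Q$. Since $q$ is a nonzero square modulo the prime $n$, we have $\gcd(q,n)=1$, so for each $s \in \{1,\dots,n-1\}$ the $q$-cyclotomic coset of $s$ equals the coset $s\gen{q}$ of the cyclic subgroup $\gen{q}$ generated by $q$.

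The key observation is that $q \in Q$, and since $Q$ is closed under multiplication this forces $\gen{q} \subseteq Q$. Consequently each $q$-cyclotomic coset $s\gen{q}$ lies inside a single coset of $Q$: it is contained in $Q$ when $s \in Q$ and in the nonresidue coset otherwise. Hence $Q$ is the union of those cyclotomic cosets $s\gen{q}$ with $s \in Q$, while the nonresidues are the union of the remaining cosets, which establishes part~(\ref{cor:cyc2}). I expect no genuine obstacle here; the only step needing care is the identification of the $q$-cyclotomic coset of $s$ with the group coset $s\gen{q}$, which is valid precisely because $\gcd(q,n)=1$. For part~(\ref{cor:cyc1}), the only real work is recognizing that the assertion is the existence half of Theorem~\ref{thm:dua_exist}.
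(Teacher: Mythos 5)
Your proposal is correct. For part~(\ref{cor:cyc1}) you take exactly the paper's route: the paper likewise observes that the existence of a splitting $(S_1,S_2)$ into unions of nonzero $q$-cyclotomic cosets is, by Definition~\ref{def:duadic}, the same thing as the existence of duadic codes of length $n$ over $\F_q$, and then invokes Theorem~\ref{thm:dua_exist}. For part~(\ref{cor:cyc2}), however, the paper gives no argument at all --- it simply refers to the claim as an easy exercise in \cite[p.~237]{HufPle} --- whereas you supply the actual proof: identifying the $q$-cyclotomic coset of $s$ with the group coset $s\gen{q}$ in $(\Z/n\Z)^{*}$ (valid because $\gcd(q,n)=1$), noting that $q$ lies in the index-$2$ subgroup $Q$ of quadratic residues so that $\gen{q}\subseteq Q$, and concluding that every cyclotomic coset sits wholly inside $Q$ or wholly inside its nontrivial coset. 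This is precisely the standard argument behind the cited exercise; it is complete, and it makes the corollary self-contained where the paper's version defers to a reference. The only point worth flagging is your reading of ``$q$ is a square modulo $n$'' as ``$q$ is a \emph{nonzero} square,'' i.e.\ $\gcd(q,n)=1$; this is the intended reading throughout the paper, since $q$-cyclotomic cosets and Definition~\ref{def:duadic} already presuppose $\gcd(q,n)=1$, so it is a harmless clarification rather than a gap.
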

\begin{proof}
The existence of a splitting $(S_1, S_2)$, each of which
consists of a union of $q$-cyclotomic cosets of $n$, is equivalent
to the existence of duadic codes of length $n$ over $\F_q$. Part
(\ref{cor:cyc1}) follows from Theorem~\ref{thm:dua_exist}. Part
(\ref{cor:cyc2}) is an easy exercise in \cite[p. 237]{HufPle}.
\end{proof}

In what follows, we consider $4$-cyclotomic cosets of an odd $n$
since $4=2^2$ is a square modulo $n$.
 By Corollary \ref{cor:cyc} (ii), duadic double circulant codes with a
splitting $(S_1, S_2)$, each of which consists of a union of
$4$-cyclotomic cosets of $n$ contain Gaborit's quadratic double
circulant codes \cite{Gab} if $n>2$ is a prime.

\medskip

The total number of splittings which consist of unions of
$4$-cyclotomic cosets for an odd $n$ such that $3 \leq n \leq 41$
can be obtained from \cite[Table 6.1, Table 6.3]{HufPle} and
\cite{Ple}. We compute explicitly splittings for $3 \leq n
\leq 41$. We also compute all the splittings
which consist of unions of $9$-cyclotomic cosets for an odd $n$
such that $5 \leq n \leq 37$. To save space, we give them in~\cite{KimWeb}.

\medskip




\section{Computational results} \label{sec:comp_res}
We compute all DDC codes of lengths up to $82, 82, 62, 58, 38$ over $\mathbb F_2,
\mathbb F_3, \mathbb F_4, \mathbb F_5$, and $\mathbb F_7$, respectively, based on
$4$-cyclotomic cosets, and one DDC code over $\mathbb F_3$ based on $9$-cyclotomic cosets. The results are displayed in Table
\ref{tab:F2} to Table \ref{tab:F7}. Full tables including all the missing lengths can be found in~\cite{KimWeb}.
We have used Magma~\cite{Mag} whenever it is necessary.

In Table \ref{tab:F2}, the first column $n$
indicates an odd positive integer in the duadic splitting. The
second column ``cl'' indicates the corresponding code length. The
third column SD(I) indicates the maximum minimum distance for
self-dual Type I codes in our calculation. The fourth column
SD(II) indicates the maximum minimum distance for self-dual Type
II codes in our calculation. The fifth column NSD indicates the
maximum minimum distance for non self-dual codes in our
calculation. The sixth column O.SD(I) indicates the optimal
minimum distance for self-dual Type I codes in \cite{GabOtm}. The
seventh column O.SD(II) indicates the optimal minimum distance for
self-dual Type II codes in \cite{GabOtm}. The eighth column
O.Linear indicates the optimal minimum distance for linear codes
in \cite{codetables}. In the ninth column Comment, O(I) indicates
our construction for self-dual Type I code is optimal, O(II)
indicates our construction for self-dual Type II code is optimal,
O(L) indicates our construction is optimal in linear codes, and
B(L) indicates the maximum minimum distance of our construction is
equal to that of the best known linear codes.

In Table \ref{tab:F3}, O.SD comes from \cite{GabOtm}.
In Table \ref{tab:F4}, O.SD(E) and O.SD(H) come from \cite{GabOtm}, \cite{GulHar3}.
In Table \ref{tab:F5}, O.SD comes from \cite{GabOtm}, \cite{GulHar3}, \cite{HanKim}, \cite{HarMun}.
In Table \ref{tab:F7}, O.SD comes from \cite{GabOtm}, \cite{GulHar3}, \cite{GulHarMiy}.
For all tables, O.Linear comes from \cite{codetables}. From our DDC construction, we show that there are many optimal
self-dual codes or optimal linear codes (or codes which has the
best known linear code parameters).

\medskip

The following examples help readers to understand our construction method.
Furthermore, the examples are nontrivial to obtain and connect our codes with other known codes.

\begin{ex} \label{ex:pure_bord}



\begin{enumerate}
\item $n=15$ in $\F_2$. \\
With the following splitting
\begin{equation*}
S_1=\{1, 4, 3, 12, 7, 13, 5\}, \mu_{2}
\end{equation*}
$\mathcal{P}_{15}(0, 0, 1, S_1, S_2)$ is a $[30, 15, 8]$ optimal linear code
and $\mathcal{B}_{15}( 0, 1, 0, 0, 0, 1, S_1, S_2 )$ is a $[32, 16, 8]$
optimal linear code.

$\mathcal{P}_{15}(0, 0, 1, S_1, S_2)$ is an extremal formally self-dual even
code. There are exactly six $[30, 15, 8]$ extremal double
circulant formally self-dual even codes \cite{GulHar2}. Since the
order of automorphism group for our code is $60$, our code is
equivalent to $C_{30, 4}$ in \cite{GulHar2}. $\mathcal{B}_{15}( 0,
1, 0, 0, 0, 1, S_1, S_2 )$ is not a formally self-dual code.

\item $n=17$ in $\F_2$. \\
With the following spitting
\begin{equation*}
S_1=\{1, 4, 16, 13, 3, 12, 14, 5\}, \mu_{-2}
\end{equation*}
$\mathcal{P}_{17}(1, 0, 1, S_1, S_2)$ is a $[34, 17, 8]$ optimal linear code
and $\mathcal{B}_{17}( 0, 1, 0, 1, 0, 1, S_1, S_2 )$ is a $[36, 18, 8]$
optimal linear codes. We have checked that with the splitting of quadratic
residues and quadratic nonresidues of $n=17$, all the minimum
distance of DDC codes are less than $8$.

$\mathcal{P}_{17}(1, 0, 1, S_1, S_2)$ is a near-extremal formally self-dual even code.
There are exactly five weight distributions for
$[34, 17, 8]$ near-extremal double circulant
formally self-dual even codes \cite{GulHar2}.
Our code has $\alpha = 17$ in the notation of weight distribution \cite{GulHar2}.
$\mathcal{B}_{17}( 0, 1, 0, 1, 0, 1, S_1, S_2 )$ is not a formally self-dual code.

\item $n=33$ in $\F_2$. \\
With the following splitting
\begin{equation*}
S_1=\{1, 4, 16, 31, 25, 3, 12, 15, 27, 9, 7, 28, 13, 19, 10, 11\}, \mu_{-1},
\end{equation*}
we have that $\mathcal{P}_{33}(1, 0, 1, S_1, S_2)$ is a $[66, 33, 12]$ Type I optimal
self-dual code and that $\mathcal{B}_{33}(0, 1, 1, 0, 0, 1, S_1, S_2)$ is a $[68,
34, 12]$ Type I optimal self-dual code. We also show that $\mathcal{P}_{33}(0, 0,
1, S_1, S_2)$ has the best known parameters $[66, 33, 12]$.

From \cite{GulHar}, there are $3$ inequivalent pure double circulant
optimal $[66, 33, 12]$ Type I self-dual codes and there are $84$ inequivalent bordered
double circulant optimal $[68, 34, 12]$ Type I self-dual codes.
Both $\mathcal{P}_{33}(1, 0, 1, S_1, S_2)$ and $\mathcal{B}_{33}(0, 1, 1, 0, 0, 1, S_1, S_2)$
have automorphism group order $330$ and the number of
minimum codewords $858$ so that $\mathcal{P}_{33}(1, 0, 1, S_1, S_2)$ is
equivalent to $C_{66,21}$ in \cite[Table 3]{GulHar} and $\mathcal{B}_{33}(0, 1, 1, 0, 0, 1, S_1, S_2)$ is equivalent to $C'_{68,1}$ in \cite[Table 7]{GulHar}.
$\mathcal{P}_{33}(0, 0, 1, S_1, S_2)$ is a formally self-dual odd code.

\item $n=41$ in $\F_2$. \\
With the following splitting
\begin{equation*}
S_1=\{1, 4, 16, 23, 10, 40, 37, 25, 18, 31, 3, 12, 7, 28, 30, 38, 29, 34, 13, 11\}, \mu_{-2}
\end{equation*}
$\mathcal{P}_{41}(1, 0, 1, S_1, S_2)$ has the best known linear code
parameters $[82, 41, 14]$. We have checked that with the splitting of quadratic
residue and quadratic nonresidue of $n=17$, all the minimum
distance of duadic pure double circulant codes are less than
$14$. We recall that $\mathcal{P}_{41}(1, 0, 1, S_1, S_2)$ is a formally
self-dual even code.

\end{enumerate}
\end{ex}

\begin{ex} \label{ex:n=43}
$n = 43$ in $\F_2$. Consider the following splitting
\begin{equation*}
S_1:=\{1,4,16,21,41,35,11,3,12,5,20,37,19,33,7,28,26,18,29,30,34\},
\mu_{2},
\end{equation*}

Then $\mathcal{P}_{43}(0, 1, 0, S_1, S_2)$ is a $[86, 43, 16]$
optimal Type I self-dual code and the bordered code
$\mathcal{B}_{43}(0, 1, 1, 1, 1,
0, S_1, S_2)$ is a $[88, 44, 16]$ optimal Type II self-dual code. We have
checked that using the splitting of quadratic residues and
quadratic nonresidues $\mathcal{P}_{43}(0, 1, 0, S_1, S_2)$ is a $[86, 43,
14]$ Type I self-dual code and $\mathcal{B}_{43}(0, 1, 1, 1, 1,
0, S_1, S_2)$ is a $[88, 44, 16]$ optimal Type II self-dual code \cite{Gab}.
We have verified that our $\mathcal{P}_{43}(0, 1, 0, S_1, S_2)$ is
equivalent to the code in \cite{DouGulHar}, and
$\mathcal{B}_{43}(0, 1, 1, 1, 1, 0, S_1, S_2)$ is not equivalent to the code
$\mathcal{B}_{43}(0, 1, 1, 1, 1, 0, S_1, S_2)$ with the splitting of
quadratic residue and quadratic nonresidue. Note that there exist
at least 70 extremal binary doubly-even self-dual codes of length
88 \cite{GooYor}. On the other hand, it is known~\cite{GulHar2006} that there are exactly $151$ extremal binary doubly-even self-dual codes of length 88.
 Hence our code $\mathcal{B}_{43}(0, 1, 1, 1, 1, 0, S_1, S_2)$ will be equivalent to one of them. However these codes~\cite{GulHar2006} were found by an exhaustive computer search and do not give an algebraic structure.
\end{ex}


\medskip

\begin{ex}

As $4$-cyclotomic cosets produce good binary DDC codes, it is
natural to consider $9$-cyclotomic cosets to construct good
ternary DDC codes. In fact, we find a new ternary self-dual code
which has the best known parameters $[76, 38, 18]$. The code is
$\mathcal{B}_{37}(1,1,1,2,0,2, S_1, S_2)$ over $\F_3$ with the following
splitting.
\begin{equation*}
S_1=\{1, 9, 7, 26, 12, 34, 10, 16, 33, 2, 18, 14, 15, 24, 31, 20,
32, 29\}, \mu_{-1}.
\end{equation*}

We remark that with the splitting of quadratic residues and
quadratic nonresidues of $n=37$, there is no self-dual code. It is
known~\cite{GabOtm} that there is one $[76, 38, 18]$ ternary
self-dual code, denoted by $(f_1;1,35)$. We have checked that
$\mathcal{B}_{37}(1,1,1,2,0,2, S_1, S_2)$ is not equivalent to $(f_1;1,35)$.
More precisely, our code has the automorphism group of order $1332$
and the number of minimum codewords $A_{18} = 79032$ while
$(f_1;1,35)$ has the automorphism group of order $76$ and $A_{18} =
71136$.
\end{ex}

\begin{thm}
There are at least two ternary self-dual $[76,38,18]$ codes.
\end{thm}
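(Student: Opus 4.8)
The plan is to prove the theorem by exhibiting two inequivalent ternary self-dual codes with parameters $[76,38,18]$. The existence of the first code is already established in the literature: as noted in the preceding example, \cite{GabOtm} records the ternary self-dual $[76,38,18]$ code denoted $(f_1;1,35)$. Thus it suffices to produce a second such code and verify that it is genuinely different from the first. My candidate is the duadic bordered double circulant code $\mathcal{B}_{37}(1,1,1,2,0,2, S_1, S_2)$ over $\F_3$, built from the $9$-cyclotomic splitting
\begin{equation*}
S_1=\{1, 9, 7, 26, 12, 34, 10, 16, 33, 2, 18, 14, 15, 24, 31, 20, 32, 29\}, \quad \mu_{-1},
\end{equation*}
which is a legitimate splitting of $n=37$ since $9=3^2$ is a square modulo $37$ and the existence of the associated duadic structure is guaranteed by Theorem~\ref{thm:dua_exist}.

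First I would verify that $\mathcal{B}_{37}(1,1,1,2,0,2, S_1, S_2)$ is self-dual: one checks that the $38$ rows of the generator matrix $B_{37}(1,1,1,2,0,2, S_1, S_2)$ are pairwise orthogonal and self-orthogonal under the standard inner product over $\F_3$, which reduces to verifying that $\gamma^2 n + r^2 + 2(s^2+t^2)(\tfrac{n-1}{2})$-type relations vanish together with the border conditions coming from $\alpha,\beta,\gamma$; since the code has dimension $38=\tfrac{76}{2}$ and is self-orthogonal, it is self-dual. Next I would confirm that its minimum distance equals $18$, which certifies the parameters $[76,38,18]$. Both of these are finite computations over $\F_3$, carried out with Magma~\cite{Mag} as indicated in the paper.

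The key remaining step, and the main obstacle, is establishing \emph{inequivalence} of the two codes, since two self-dual codes with the same parameters can easily be monomially equivalent. The cleanest route is to compare invariants that are preserved under code equivalence. The plan is to use two such invariants: the order of the automorphism group and the number of minimum-weight codewords $A_{18}$. As reported in the example, $\mathcal{B}_{37}(1,1,1,2,0,2, S_1, S_2)$ has automorphism group of order $1332$ and $A_{18}=79032$, whereas $(f_1;1,35)$ has automorphism group of order $76$ and $A_{18}=71136$. Either discrepancy alone forces the codes to be inequivalent; since both $|\mathrm{Aut}|$ and the weight enumerator are equivalence invariants, the two codes cannot be monomially equivalent. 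I expect the automorphism-group computation to be the most delicate part in practice, as it depends on an exhaustive or group-theoretic search that Magma performs, but once the two invariants are tabulated the logical conclusion is immediate.

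Putting these pieces together yields two self-dual $[76,38,18]$ ternary codes, namely $(f_1;1,35)$ from \cite{GabOtm} and our $\mathcal{B}_{37}(1,1,1,2,0,2, S_1, S_2)$, which are inequivalent by the invariant comparison above; hence there are at least two such codes, proving the theorem.
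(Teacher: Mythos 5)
Your proposal is correct and follows essentially the same route as the paper: the paper's proof is exactly the preceding example, which constructs $\mathcal{B}_{37}(1,1,1,2,0,2, S_1, S_2)$ over $\F_3$ from the given $9$-cyclotomic splitting, verifies by computation that it is a self-dual $[76,38,18]$ code, and distinguishes it from $(f_1;1,35)$ of \cite{GabOtm} via the same two equivalence invariants you cite (automorphism group order $1332$ vs.\ $76$ and $A_{18}=79032$ vs.\ $71136$). Your only deviation is the sketched orthogonality relation for self-duality, which is immaterial since you, like the paper, ultimately rely on the finite Magma verification.
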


\section{Conclusion}

We have introduced a subclass of double circulant codes, called
duadic double circulant codes. This is a natural generalization
of quadratic double circulant codes~\cite{Gab}. We have
constructed many interesting linear (self-dual) codes over
$\mathbb F_2$, $\mathbb F_3$, $\mathbb F_4$, $\mathbb F_5$ and
$\mathbb F_7$. As shown from our tables, many of our
codes have good minimum distances. In particular, we have found a
new self-dual ternary $[76,38,18]$ code, which is not equivalent to
the previously known code with the same parameters in~\cite{GabOtm}.

\section*{Acknowledgment}

 S. Han was supported by the Basic Science Research Program through the National Research Foundation of Korea (NRF), which is supported by the Ministry of Education, Science and Technology (2010-0007232).
 J.-L. Kim was partially supported by the Project Completion Grant (year 2011-2012) at the University of Louisville.




\begin{table}
\caption{Duadic double circulant codes over $\F_2$}
\label{tab:F2}
\centering
\begin{tabular}{|c|c|c|c|c|c|c|c|l|}                               \hline
 $n$ & cl    & SD(I) &SD(II) & NSD    & O.SD(I)  & O.SD(II) & O.Linear   & Comment           \\  \hline
 3   &  6    &  2    &       &   3    &   2      &          &      3     &   O(I), O(L)  \\  \hline
 3   &  8    &  2    &  4    &   3    &   2      &    4     &      4     &   O(I), O(II), O(L) \\  \hline
 5   &  10   &  2    &       &   4    &   2      &          &      4     &   O(I), O(L)     \\  \hline
 5   &  12   &  4    &       &   4    &   4      &          &      4     &   O(I), O(L)     \\  \hline
 7   &  14   &  2    &       &   4    &   4      &          &      4     &   O(L)     \\  \hline
 7   &  16   &  2    &  4    &   4    &   4      &    4     &      5     &   O(II)     \\  \hline
 9   &  18   &  4    &       &   4    &   4      &          &      6     &   O(I)     \\  \hline
 9   &  20   &  4    &       &   4    &   4      &          &      6     &   O(I)     \\  \hline
 11  & 22    &  6    &       &   7    &   6      &          &      7     &   O(I), O(L)     \\  \hline
 11  & 24    &  2    &  8    &   7    &   6      &    8     &      8     &   O(II), O(L)     \\  \hline
 13  & 26    &  2    &       &   7    &   6      &          &      7     &   O(L)      \\  \hline
 13  & 28    &  4    &       &   8    &   6      &          &      8     &   O(L)      \\  \hline
 15  & 30    &  2    &       &   8    &   6      &          &      8     &   O(L)      \\  \hline
 15  & 32    &  2    &  4    &   8    &   8      &    8     &      8     &   O(L)      \\  \hline
 17  & 34    &  2    &       &   8    &   6      &          &      8     &   O(L)     \\  \hline
 17  & 36    &  4    &       &   8    &   8      &          &      8     &   O(L)  \\  \hline
 19  & 38    &  8    &       &   7    &   8      &          &      8-9   &   O(I), B(L)     \\  \hline
 19  & 40    &  2    &  8    &   8    &   8      &    8     &      9-10  &   O(II)     \\  \hline
 29  & 58    &  2    &       &   12   &   10     &          &      12-14 &   O(L)   \\  \hline
 29  & 60    &  4    &       &   12   &   12     &          &      12-14 &   O(L)   \\  \hline
 33  & 66    &  12   &       &  12    &   12     &          &      12-16 &   O(I), B(L)    \\  \hline
 33  & 68    &  12   &       &  12    &   12     &          &      13-16 &   O(I)   \\  \hline
 41  & 82    &  2    &       &   14   &   14-16  &          &      14-20 &   B(L) \\  \hline
\end{tabular}
\end{table}

\begin{table}
\caption{Duadic double circulant codes over $\F_3$}
\label{tab:F3}
\centering
\begin{tabular}{|c|c|c|c|c|c|c|}                               \hline
 $n$ & cl    & SD   & NSD   & O.SD    & O.Linear     & Comment       \\  \hline
 3   &  6    &      &  3    &         &    3         &   O(L)        \\  \hline
 3   &  8    &     &  4    &   3     &    4         &   O(L)        \\  \hline
 5   &  10   &      &  5    &         &    5         &   O(L)        \\  \hline
 5   &  12   &  6   &  5    &   6     &    6         &   O(S), O(L)  \\  \hline

 7   &  16   &  6   &  6    &   6     &    6         &   O(S), O(L)  \\  \hline

 11  & 22    &      &  8    &         &    8         &   O(L)        \\  \hline
 11  & 24    &  9   &  8    &   9     &    9         &   O(S), O(L)  \\  \hline
 13  & 26    &      &  8    &         &    8-9       &   B(L)        \\  \hline

 17  & 34    &      &  11   &         &    11-12     &   B(L)        \\  \hline
 17  & 36    &  12  &  11   &   12    &    12        &   O(S), O(L)  \\  \hline
 19  & 38    &      &  11   &         &    11-13     &   O(L)        \\  \hline
 19  & 40    &  12  &  11   &   12    &    12-14     &   O(S), B(L)  \\  \hline

 23  & 46    &      &  14   &         &    14-15     &   B(L)        \\  \hline
 23  & 48    &  15  &  14   &   15    &    15-16     &   O(S), B(L)  \\  \hline

 29  & 58    &      &  17   &         &    17-19     &   B(L)        \\  \hline
 29  & 60    &  18  &  17   &   18    &    18-20     &   O(S), B(L)  \\  \hline
 31  & 62    &      &  17   &         &    17-20     &   B(L)        \\  \hline
 31  & 64    &  18  &  17   &   18    &    18-21     &   O(S), B(L)  \\  \hline

 41  & 82    &      &  20   &         &    20-26     &   B(L)        \\  \hline
\end{tabular}
\end{table}

\begin{table}
\caption{Duadic double circulant codes over $\F_4$}
\label{tab:F4}
\centering
\begin{tabular}{|c|c|c|c|c|c|c|c|c|c|}                               \hline
 $n$ & cl    & SD(E)  & SD(H) & NSD    & O.SD(E)  & O.SD(H)   & O.Linear & Comment            \\  \hline
 3   &  6    &  2    &   4    &  3     &    3     &    4      &    4     &   O(H), O(L)       \\  \hline
 3   &  8    &  4    &   4    &  4     &    4     &    4      &    4     &   O(E), O(H), O(L) \\  \hline
 5   &  10   &  2    &   4    &  4     &    4     &    4      &    5     &   O(H)             \\  \hline
 5   &  12   &  4    &   4    &  4     &    6     &    4      &    6     &   O(H)             \\  \hline
 7   &  14   &  6    &   4    &  6     &    6     &    6      &    6     &   O(E), O(L)       \\  \hline
 7   &  16   &  6    &   4    &  6     &    6     &    6      &    7     &   O(E)             \\  \hline
 11  & 22    &  6    &   8    &  7     &    8     &    8      &    8-9   &   O(H), B(L)       \\  \hline
 11  & 24    &  8    &   8    &  8     &    8-10  &    8      &    9  &   B(E), O(H)       \\  \hline

 17  & 34    &  2    &   10   &  11    &    10-12 &    10-12  &    11-13 &   B(H), B(L)       \\  \hline
 17  & 36    &  4    &   12   &  11    &    11-14 &    12-14  &    12-14 &   B(H), B(L)       \\  \hline
 23  & 46    &  14   &   4    &  14    &    14-16 &    14-16  &    14-17 &   B(E), B(L)       \\  \hline
 23  & 48    &  14   &   4    &  14    &    14-18 &    14-18  &    14-18 &   B(E), B(L)       \\  \hline
 31  & 62    &  16   &   4    &  16    &    16-23 &    18-22  &    18-23 &   B(E)             \\  \hline

\end{tabular}
\end{table}
\begin{table}
\caption{Duadic double circulant codes over $\F_5$}
\label{tab:F5}
\centering
\begin{tabular}{|c|c|c|c|c|c|c|}                               \hline
 $n$ & cl    & SD    & NSD    &   O.SD   & O.Linear & Comment\\  \hline
 3   &  6    &  4    &  4     &    4     &     4        &   O(S), O(L)     \\  \hline
 3   &  8    &  4    &  4     &    4     &     4        &   O(S), O(L)     \\  \hline
 5   &  10   &  2    &  5     &    4     &     5        &   O(L)     \\  \hline
 5   &  12   &  4    &  6     &    6     &     6        &   O(L)     \\  \hline
 7   &  14   &  6    &  6     &    6     &     6        &   O(S), O(L)     \\  \hline
 7   &  16   &  7    &  7     &    7     &     7        &   O(S), O(L)      \\  \hline

 11  & 22    &  7    &  8     &    8     &     8-10     &   B(L)     \\  \hline
 11  & 24    &  9    &  9     &    9     &     9-10     &   O(S), B(L)     \\  \hline

 17  & 34    &  4    &  11    &    11-12 &     11-14    &   B(L)     \\  \hline
 17  & 36    &  4    &  12    &    12-13 &     12-15    &   B(L)     \\  \hline
 19  & 38    &  12   &  12    &    12-14 &     12-16    &   B(S), B(L)     \\  \hline
 19  & 40    &  13   &  13    &    13-15 &     13-17    &   B(S), B(L)     \\  \hline

 23  & 46    &  14   &  14    &    14-20 &     14-20    &   B(S), B(L)      \\  \hline
 23  & 48    &  14   &  15    &    14-20 &     15-20    &   B(S), B(L)      \\  \hline

 29  & 58    &  16   &  17    &    16-24 &     18-24    &   B(S) \\  \hline
\end{tabular}
\end{table}

\begin{table}
\caption{Duadic double circulant codes over $\F_7$}
\label{tab:F7}
\centering
\begin{tabular}{|c|c|c|c|c|c|c|}                                       \hline
 $n$ & cl    & SD   & NSD   & O.SD & O.Linear  & Comment     \\  \hline
 3   &  6    &      &  4    &         &     4        &   O(L)      \\  \hline
 3   &  8    &  5   &  5    &    5    &     5        &   O(S), O(L)\\  \hline
 5   &  10   &      &  5    &         &     5        &   O(L)      \\  \hline
 5   &  12   &  6   &  6    &    6    &     6        &   O(S), O(L)\\  \hline

 7   &  16   &     &  7    &    7    &     7        &   O(L)      \\  \hline

 11  & 22    &      &  9    &         &     9-10     &   B(L)      \\  \hline
 11  & 24    &  9   &  9    &    9-11 &     10-11    &   B(S)      \\  \hline
 13  & 26    &      &  10   &         &     10-12    &   B(L)      \\  \hline
 13  & 28    &  10  &  10   &    11-13&     11-13    &   B(S)      \\  \hline

 17  & 34    &      &  12   &         &     12-15    &   B(L)      \\  \hline

 19  & 38    &      &  13   &         &     13-17    &   B(L)      \\  \hline

\end{tabular}
\end{table}

\end{document}